\def\Beweisende{\square}            
\def\BewEnde{\hfill{\Beweisende}}
\def\phi{\varphi}
\def\RR{{\mathbb R}}
\def\EE{{\mathbb E}}
\def\CC{{\mathbb C}}
\def\dach#1{\widehat{#1}}
\def\Vkt#1{{\mathbf #1}}
\newcommand{\mVkt}[1]{\dach{\Vkt #1}}
\newcommand{\kVkt}[1]{\widetilde{\Vkt #1}}
\definecolor{blau}{rgb}{0,0,1}
\newcommand{\blau}{\color{blau}}
\definecolor{rot}{rgb}{1,0,0}
\newcommand{\rot}{\color{rot}}
\definecolor{green}{rgb}{0,1,0} 
\newcommand{\green}{\color{green}}
\definecolor{cyan}{rgb}{0,1,1}
\newcommand{\cyan}{\color{cyan}}
\definecolor{violet}{rgb}{1,.54,1}
\definecolor{magenta}{rgb}{1,0,1}
\newcommand{\magenta}{\color{magenta}}
\begin{document}
\mainmatter              
\title{On the snappability and singularity-distance 
of frameworks with bars and triangular plates}
\titlerunning{Snappability and singularity-distance of frameworks with bars and triangular plates}  
%
\author{G. Nawratil}
\authorrunning{G. Nawratil} 
%
%
\institute{Institute of Discrete Mathematics and Geometry \& \\ Center for Geometry and Computational Design, TU Wien, Austria,\\
\email{nawratil@geometrie.tuwien.ac.at},\\ WWW home page:
\texttt{https://www.dmg.tuwien.ac.at/nawratil/}
}

\maketitle              

\begin{abstract}
In a recent article the author presented a method to measure the 
snapping capability -- shortly called {\it snappability} -- of bar-joint frameworks 
based on the total elastic strain energy by computing the deformation of all bars using 
Hooke's law and the definition of Cauchy/Engineering strain. 
Within the paper at hand, we extend this approach to isostatic frameworks composed of bars and 
triangular plates by using the physical concept of Green-Lagrange strain. 
An intrinsic pseudometric based on the resulting total elastic strain energy density 
cannot only be used for evaluating the snappability but also for   
measuring the distance to the closest singular configuration. 
The presented methods are demonstrated on the basis of the 3-legged planar parallel manipulator.
\keywords{Snapping framework, singularity distance, elastic deformation}
\end{abstract}

\section{Introduction}
\label{sec:introduction}
A framework in the Euclidean space $\EE^n$ consists of a knot set $\mathcal{K}=\left\{K_1,K_2,\ldots,K_s\right\}$ 
and an abstract graph $G$ on $\mathcal{K}$ fixing the combinatorial structure. We denote the edge connecting  $K_i$ to $K_j$ 
by $e_{ij}$ with $i<j$ and collect all indices of knots edge-connected to $K_i$ in the knot neighborhood $N_i$.  
Moreover we denote the number of edges in the graph by $b$ and fix the intrinsic metric of the framework by assigning a length $L_{ij}\in\RR_{>0}$ to each edge $e_{ij}$. 
In general this assignment does not determine the shape of the framework uniquely 
thus a framework has different incongruent 
realizations. For example, a triangular framework has in general two realizations in  $\EE^2$, which are not congruent with respect to the group of direct isometries. 
If we consider the isometry group  
then this number halves. 
We denote a framework's realization by $G(\Vkt k)$ where the configuration of knots $\Vkt k:=(\Vkt k_1,\ldots ,\Vkt k_s)$ is composed of the $n$-dimensional
coordinate vectors $\Vkt k_i$ of the knots $K_i$ ($i=1,\ldots, s$). 
Note that a framework is called isostatic if the removal of any edge of $G$ results in a flexible framework.

In general we materialize edges $e_{ij}$ by straight bars, but if three edges $e_{ij}$, $e_{ik}$ and $e_{jk}$ form a triangle structure then 
the three bars can alternatively be replaced by a triangular plate\footnote{A $r$-plate is a compact connected set in $\EE^n$ whose affine span is $r$-dimensional 
according to \cite{kiraly}. Triangular plates refer to $2$-plates of triangular shape.}.  
The elements of the framework are linked in the planar case ($n=2$) by rotational joints and in the spatial case ($n=3$) either by spherical joints or hinges. 
We assume that 
\begin{enumerate}[(I)]
\item
all bars and triangular plates are uniform made of the same homogeneous isotropic material deforming at constant volume,
\item
all bars have the same cross-sectional area $A$,
\item
all joints are without clearance.
\end{enumerate}

A realization  
is called a snapping realization if it is {\it close enough} 
to another incongruent realization such that the physical model can snap into this neighboring realization due to non-destructive elastic deformations of material. 
Shakiness 
can be seen as the limiting case where two realizations of a framework coincide; e.g.\ \cite{wohlhart}. 

We define an intrinsic pseudometric based on the total elastic strain energy density of the isostatic  
framework (Sec.\ \ref{totalGL}) using the physical model of Green-Lagrange (GL) strain (Sec.\ \ref{basicsGL}). This metric is 
then employed to measure (a) the snapping capability (shortly called {\it snappability}) of a realization (Sec.\ \ref{sec:snap}) 
and (b) the distance to the next shaky (also referred to as {\it singular} or {\it infinitesimal flexible}) realization (Sec.\ \ref{sec:singdist}).

One can apply the proposed approach to almost all known examples\footnote{For a detailed review please see \cite{ark2020} and the references therein.} 
of snapping spatial frameworks (cf.\ \cite{ijss}), as for example the {\it Siamese dipyramids} \cite{goldberg,gorkavyy}, 
the {\it four-horn} \cite{schwabe} or Wunderlich's snapping octahedra, icosahedra and dodecahedra, which are reviewed in \cite{stachel_wunderlich}.
But the presented method is not limited to the listed triangular plate-hinge structures -- also known as panel-hinge 
frameworks\footnote{A body-hinge framework with the property that all hinges of each body are coplanar \cite{kiraly}.} --  
as it can also handle structures including bars, as for example the 3-legged planar parallel 
manipulator or an spatial hexapod of octahedral structure, which are both of practical importance. 
Especially for these mechanical devices also the proposed singularity-distance is of interest (e.g.\ for path planning), 
which can be seen as an alternative to the extrinsic metrics presented by the author in \cite{WC_2019}.
Therefore we demonstrate our methods on the basis of a 3-legged planar parallel manipulator (cf.\ Ex.\ \ref{ex1}).

\section{Elastic GL strain energy of bars and triangular plates}\label{basicsGL}

In \cite{ark2020} the elastic strain energy stored in a deformed bar $e_{ij}$ 
was computed by 
\begin{equation}\label{cauchy}
U_{ij}=\frac{EA}{2L_{ij}} (L_{ij}'-L_{ij})^2
\end{equation}
where $E$ denotes the modulus of 
elasticity\footnote{In this paper we assume $E>0$ as for conventional structural material $E$ is positive.}, 
 $L_{ij}'$ is the deformed length of the bar and $L_{ij}$ its original one.  
The formula (\ref{cauchy}) is based on the so-called Cauchy/Engineering (CE) strain, which 
can also be extended to triangular elements playing a central role in the {\it plane stress}\footnote{The shear stress and normal 
stress perpendicular to the plane of the triangle is zero.}
analysis within the finite element method  (e.g.\ see \cite[Chapter 6]{logan}). But the resulting elastic strain energy of a triangular plate is not  
invariant under rotations; i.e.\ a pure rotation already implies a deformation energy. 
Therefore this formulation is not suited for kinematic considerations.
As a consequence we follow a more sophisticated approach; namely the GL strain (e.g.\ see \cite[Sec.\ 2.4.2]{reddy}), 
which is summarized in the remainder of this section.

Let $K_i, K_j, K_k$ denote the vertices of the triangular plate in the given undeformed configuration 
and  $K'_i, K'_j, K'_k$ in the deformed one. 
Then there exists a uniquely defined $2\times 2$ matrix $\Vkt A$ which has the property
\begin{equation}\label{def:affine}
\Vkt A(\mVkt k_j-\mVkt k_i)=\mVkt k'_j-\mVkt k'_i, \qquad 
\Vkt A(\mVkt k_k-\mVkt k_i)=\mVkt k'_k-\mVkt k'_i, 
\end{equation}
where $\mVkt k_z$ (resp.\ $\mVkt k'_z$) is a 2-dimensional vector of $K_z$ (resp.\  $K'_z$) for $z\in\left\{i,j,k\right\}$ 
with respect to a planar Cartesian frame $\mathcal F$ (resp.\  $\mathcal F'$) attached to the carrier plane of the triangle $K_i, K_j, K_k$ (resp.\ $K'_i, K'_j, K'_k$). 
Then the GL normal strains $\varepsilon_{x}$ and $\varepsilon_{y}$, respectively, and the 
GL shear strain $\gamma_{xy}$ can be computed as 
\begin{equation}
\begin{pmatrix}
\varepsilon_{x} & \gamma_{xy} \\
\gamma_{xy} & \varepsilon_{y}
\end{pmatrix}=\frac{1}{2}\left(\Vkt A^T\Vkt A-\Vkt I\right).
\end{equation}
We reassemble these quantities in the vector $\Vkt e=(\varepsilon_{x},\varepsilon_{y},2\gamma_{xy})^T$. 
Using this notation the elastic GL strain energy of the deformation can be calculated as 
\begin{equation}\label{esetp}
U_{ijk}=V_{ijk}\tfrac{1}{2}\Vkt e^T\Vkt D\Vkt e
\end{equation}
where $V_{ijk}$ denotes the volume of the triangular plate and $\Vkt D$ the planar stress/strain matrix (constitutive matrix), which reads as:  
\begin{equation}
\Vkt D=\frac{E}{1-\nu^2}
\begin{pmatrix}
1 & \nu & 0 \\
\nu & 1 & 0 \\
 0 & 0 & \tfrac{1-\nu}{2}
\end{pmatrix}.
\end{equation}
We can set the Poisson’s ratio $\nu$ to one-half due to   
the assumed invariance of the volume $V_{ijk}$ under deformation (cf.\ assumption I), which is used later on for the computation of the total elastic GL strain energy density. 
Moreover we set $E=1$ as done in \cite{ark2020} in order to reduce the physical formulation to its geometric core. 

Following the same approach the elastic GL strain energy of a deformed bar can be computed as 
\begin{equation}\label{bar_GL}
U_{ij}=\frac{A}{8L_{ij}^3} (L'_{ij}-L_{ij})^2(L'_{ij}+L_{ij})^2. 
\end{equation}


\section{The framework's total elastic GL strain energy and its density}\label{totalGL}

The total  elastic GL strain energy $U$ of a framework composed of bars and triangular plates 
results from the summation of the plate energies (\ref{esetp}) and the bar energies (\ref{bar_GL}).

Recall that we can model a triangular structure either as bar-joint framework or as triangular plate.  
In order to ensure a fair comparability of both approaches, the used amount of material has to be the same; i.e.\
$V_{ijk}=A(L_{ij}+L_{ik}+L_{jk})$.
Taking this relation into account the following lemma holds:

\begin{lemma}\label{basic}
The total elastic GL strain energy $U$ of a framework composed of bars and triangular plates 
is a rational polynomial function with respect to the intrinsic metric of the framework. 
The polynomial in the denominator only depends on the undeformed edge lengths $L_{ij}$. The polynomial in  
the numerator also includes the deformed edge lengths  $L_{ij}'$ and it is of degree 4 with 
respect to these variables $L_{ij}'$ which only appear with even powers. 
\end{lemma}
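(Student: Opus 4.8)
The plan is to analyze the two contributions to $U$ — the bar energies (\ref{bar_GL}) and the plate energies (\ref{esetp}) — separately, show each has the claimed rational form, and then observe that a finite sum of such rational functions still has it, once we clear denominators.

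First I would treat the bar energies. From (\ref{bar_GL}), a single bar contributes
$U_{ij}=\tfrac{A}{8L_{ij}^3}(L'_{ij}-L_{ij})^2(L'_{ij}+L_{ij})^2=\tfrac{A}{8L_{ij}^3}\bigl((L'_{ij})^2-L_{ij}^2\bigr)^2$,
which is manifestly a rational function whose denominator $8L_{ij}^3$ depends only on the undeformed length, and whose numerator $A\bigl((L'_{ij})^2-L_{ij}^2\bigr)^2$ is a polynomial of degree $4$ in $L'_{ij}$ in which only even powers of $L'_{ij}$ occur. So every bar term already has exactly the asserted shape, with the edge variables entering as squares.

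The substantive step is the plate energies. Here the main obstacle is that $U_{ijk}$ in (\ref{esetp}) is written in terms of the affine map $\Vkt A$ and the planar frames $\mathcal F,\mathcal F'$, not directly in terms of edge lengths, so I must re-express $\tfrac12\Vkt e^T\Vkt D\Vkt e$ intrinsically. The key identity is that the symmetric matrix $\Vkt A^T\Vkt A$ is the pullback (Gram) matrix of the deformed configuration read in the undeformed coordinates; concretely, if $\Vkt B$ denotes the $2\times2$ matrix with columns $\mVkt k_j-\mVkt k_i$ and $\mVkt k_k-\mVkt k_i$ and $\Vkt B'$ the analogous matrix in the deformed frame, then $\Vkt A=\Vkt B'\Vkt B^{-1}$ and hence $\Vkt A^T\Vkt A=(\Vkt B^{-1})^T(\Vkt B'^T\Vkt B')\Vkt B^{-1}$. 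The entries of $\Vkt B^T\Vkt B$ and $\Vkt B'^T\Vkt B'$ are inner products of edge vectors, and by the law of cosines each such inner product is a linear combination of squared edge lengths $L_{ij}^2,L_{ik}^2,L_{jk}^2$ (respectively $(L'_{ij})^2,(L'_{ik})^2,(L'_{jk})^2$). Therefore the Green–Lagrange strain tensor $\tfrac12(\Vkt A^T\Vkt A-\Vkt I)$ has entries that are rational in the undeformed lengths (through the fixed matrix $\Vkt B^{-1}$, equivalently through $\det\Vkt B^2 = $ a polynomial in $L_{ij}^2$, up to the area) and polynomial — in fact affine-linear and even-powered — in the $(L'_{ij})^2$. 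Consequently $\Vkt e$ has the same structure, $\tfrac12\Vkt e^T\Vkt D\Vkt e$ is quadratic in $\Vkt e$, and multiplying by the material volume via $V_{ijk}=A(L_{ij}+L_{ik}+L_{jk})$ keeps the denominator a function of the undeformed lengths only. Squaring $\Vkt e$ produces degree $2$ in the $(L'_{ij})^2$, i.e.\ degree $4$ in the $L'_{ij}$ with only even powers — exactly matching the bar terms.

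Finally I would assemble: $U=\sum U_{ij}+\sum U_{ijk}$ is a sum of rational functions, each with a numerator that is a polynomial of degree $4$ in the relevant $L'_{ij}$ (even powers only) and a denominator depending solely on the undeformed $L_{ij}$. Bringing the sum over a common denominator — the product (or lcm) of the individual denominators, still a polynomial in the $L_{ij}$ alone — yields a single rational polynomial function; the numerator is a sum of the individual numerators each multiplied by polynomials in the $L_{ij}$, so it stays of degree $4$ in every $L'_{ij}$ with only even powers present, and the denominator still depends only on the undeformed edge lengths. This is precisely the assertion of the lemma. The one point to double-check carefully is that setting $\nu=\tfrac12$ and $E=1$ does not introduce a spurious dependence on deformed lengths in the denominator — it does not, since $\Vkt D$ is a constant matrix — and that the invariance of $V_{ijk}$ under deformation (assumption I) is what legitimizes using the \emph{undeformed} volume $A(L_{ij}+L_{ik}+L_{jk})$ as the prefactor.
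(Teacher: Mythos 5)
Your overall strategy coincides with the paper's: handle bars directly from Eq.~(\ref{bar_GL}), reduce the plate energy to intrinsic data, and then combine everything over a common denominator. The bar part and the final assembly are fine. The paper does the plate part by fixing an explicit frame ($\mVkt k_i=(0,0)^T$, $\mVkt k_j=(L_{ij},0)^T$, $\mVkt k_k$ expressed via the law of cosines and Heron's formula) and substituting into Eq.~(\ref{esetp}); your Gram-matrix formulation $\Vkt A^T\Vkt A=(\Vkt B^{-1})^T(\Vkt B'^T\Vkt B')\Vkt B^{-1}$ is a coordinate-free version of the same computation and correctly isolates where the deformed lengths enter (linearly in their squares, through $\Vkt B'^T\Vkt B'$), which is what yields the degree-$4$/even-powers claim.

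However, one step is overstated and, as written, false: the entries of the strain tensor $\tfrac12(\Vkt A^T\Vkt A-\Vkt I)$ are \emph{not} rational functions of the edge lengths. The matrix $\Vkt B^{-1}$ is not determined by the Gram matrix $\Vkt B^T\Vkt B$; in the paper's frame its entries contain $1/y_k$ with $y_k=\sqrt{(L_{ij}+L_{ik}+L_{jk})(\cdots)}\,/(2L_{ij})$, so the individual strain components (in particular $\gamma_{xy}$) are rational in the lengths \emph{and} in this square root, and they change under a rotation or reflection of the undeformed frame. What rescues the lemma is that the energy $\tfrac12\Vkt e^T\Vkt D\Vkt e$ is invariant under the reflection $y_k\mapsto -y_k$ (equivalently $\gamma_{xy}\mapsto-\gamma_{xy}$), because $\Vkt D$ has no normal--shear coupling; hence the energy is an even function of that square root and therefore genuinely rational in the $L$'s. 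Equivalently: only the invariants $\operatorname{tr}(G^{-1}G')$ and $\operatorname{tr}\bigl((G^{-1}G')^2\bigr)$, with $G=\Vkt B^T\Vkt B$ and $G'=\Vkt B'^T\Vkt B'$, enter the isotropic energy, and these are rational with denominator a power of $\det G$, which is polynomial in the $L_{ij}^2$. This invariance argument (or an explicit verification, which is exactly what the paper's footnote about the independence of the sign of the $y$-coordinate of $\mVkt k_k$ records) is the missing ingredient; once you add it, your proof closes.
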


\begin{proof}
We choose the planar Cartesian frame $\mathcal F$ in a way that 
its origin equals $K_i$ and that $K_j$ is located on its positive $x$-axis; i.e.\  
$\mVkt k_i=(0,0)^T$, $\mVkt k_j=(L_{ij},0)^T$ and
\begin{equation}
\mVkt k_k=\left(\tfrac{L_{ij}^2+L_{ik}^2-L_{jk}^2}{2L_{ij}}, 
\pm\tfrac{\sqrt{
(L_{ij} + L_{ik} + L_{jk})(L_{ij} - L_{ik} + L_{jk})(L_{ij} + L_{ik} - L_{jk})(L_{ik} + L_{jk} - L_{ij})}}{2L_{ij}}
\right)^T
\end{equation}
where the $y$-coordinate can have positive or negative sign for the dimension $n=2$ 
depending on the orientation of the triangle $K_i, K_j, K_k$.  For the dimension 
 $n>2$ one can always assume a positive sign.   
Similar considerations can be done for the planar Cartesian frame $\mathcal F'$ with respect to the triangle $K_i', K_j', K_k'$ 
where one ends up with exactly the same coordinatisation as above but only primed. Inserting these coordinates of the 
six vectors $\mVkt k_i,\mVkt k_j,\mVkt k_k,\mVkt k_i',\mVkt k_j',\mVkt k_k'$ into Eq.\ (\ref{esetp})
shows the result\footnote{The obtained expression is independent of the 
sign of the $y$-coordinate of $\mVkt k_k$ and $\mVkt k_k'$.}
 for plates.  For bars this result is directly visible from Eq.\ (\ref{bar_GL}), which concludes the proof. 
\hfill $\BewEnde$
\end{proof}

As in our case the undeformed lengths $L_{ij}$ are given we can interpret $U$ as a function of the bar lengths $\Vkt L' = (\ldots, L_{ij}' ,\ldots)^T\in \RR^b$ 
of a realization $G(\Vkt k')$; i.e.\ $U(\Vkt L')$. 
Note that due to Lemma \ref{basic} the formula for $U(\Vkt L')$ can be written in matrix formulation as 
$U(\Vkt L')=\kVkt L'^T \Vkt M \kVkt L'$ where $\Vkt M$ is a symmetric $(b+1)$-matrix and $\kVkt L':=(1, \ldots ,L_{ij}'^2, \ldots)^T$
is composed of the $b$ squared edge lengths and the number $1$.

From the underlying physical interpretation it seems to be clear that the elastic strain energy $U_{ijk}$ given in Eq.\ (\ref{esetp}) is positive semi-definite; but one can also prove 
this mathematically by decomposing it into a sum of squares (see e.g.\ \cite{reznick}). 
For $U_{ij}$ given in Eq.\ (\ref{bar_GL}) it can immediately be seen that it is positive semi-definite.
Due to the resulting positive semi-definiteness of $U(\Vkt L')$ its density $D(\Vkt L'):=U(\Vkt L')/(AL)$,  where $L$ is the total length $L=\sum_{i<j}L_{ij}$ of the framework, 
can be used for building up the following intrinsic pseudometric $d_p$ for framework realizations: 
\begin{equation}\label{pseudo}
d_p:\,\,
\RR^b\times\RR^b\rightarrow \RR_{\geq 0} \quad\text{with}\quad
(\Vkt L',\Vkt L'')\mapsto |D(\Vkt L')-D(\Vkt L'')|
\end{equation}
where $\Vkt L'' = (\ldots, L_{ij}'' ,\ldots)^T\in \RR^b$ collects the bar lengths of another realization $G(\Vkt k'')$.


\subsection{Snappability}\label{sec:snap}

\begin{theorem}\label{thm:critic}
The critical points of the total elastic GL strain energy $U(\Vkt k')$ of an isostatic framework correspond to realizations $G(\Vkt k')$ that are either undeformed or 
deformed and  shaky. 
\end{theorem}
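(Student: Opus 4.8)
The plan is to change variables from the knot coordinates to the squared edge lengths, exploiting that $U$, although generically a high-degree polynomial in $\Vkt k'$, is a \emph{convex} function of these intrinsic quantities. Write $\Vkt q:=(\ldots,L_{ij}'^2,\ldots)^T\in\RR^b$ and let $\Vkt q_0:=(\ldots,L_{ij}^2,\ldots)^T$ be the undeformed data. By Lemma~\ref{basic} in the form $U=\kVkt L'^T\Vkt M\kVkt L'$, the map $\Vkt q\mapsto U$ is a polynomial of degree two; since $U\ge 0$ on all of $\RR^b$ and $U(\Vkt q_0)=0$, the point $\Vkt q_0$ is a global minimum, so $U(\Vkt q)=\tfrac12(\Vkt q-\Vkt q_0)^T\Vkt H(\Vkt q-\Vkt q_0)$ with a positive semi-definite Hessian $\Vkt H$, and $\partial U/\partial\Vkt q=\Vkt H(\Vkt q-\Vkt q_0)$. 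On the other hand each $L_{ij}'^2=(\Vkt k_i'-\Vkt k_j')\cdot(\Vkt k_i'-\Vkt k_j')$ is quadratic in $\Vkt k'$ and $\partial(L_{ij}'^2)/\partial\Vkt k'$ is twice the row of the rigidity matrix $\Vkt R(\Vkt k')$ belonging to the edge $e_{ij}$. The chain rule therefore gives
\begin{equation*}
\tfrac12\,\frac{\partial U}{\partial\Vkt k'}\;=\;\Vkt R(\Vkt k')^T\,\Vkt w,\qquad \Vkt w:=\frac{\partial U}{\partial\Vkt q}=\Vkt H(\Vkt q-\Vkt q_0),
\end{equation*}
so $G(\Vkt k')$ is a critical point of $U(\Vkt k')$ precisely when the ``load'' $\Vkt w$ lies in the kernel of $\Vkt R(\Vkt k')^T$, i.e.\ when $\Vkt w$ is an equilibrium self-stress of the realization.

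Next I would distinguish whether $G(\Vkt k')$ is shaky. If it is not, then --- the framework being isostatic, so that $b=sn-\binom{n+1}{2}$ --- infinitesimal rigidity means $\operatorname{rank}\Vkt R(\Vkt k')=b$, hence $\Vkt R(\Vkt k')^T$ is injective and the only self-stress is $\Vkt 0$. Thus $\Vkt w=\Vkt H(\Vkt q-\Vkt q_0)=\Vkt 0$, which by positive semi-definiteness of $\Vkt H$ forces $(\Vkt q-\Vkt q_0)^T\Vkt H(\Vkt q-\Vkt q_0)=0$, i.e.\ $U(\Vkt k')=0$. Because $U$ is the sum of the non-negative bar terms~(\ref{bar_GL}) and plate terms~(\ref{esetp}), every summand vanishes: a vanishing bar term gives $L_{ij}'=L_{ij}$ at once, while a vanishing plate term gives $\Vkt e^T\Vkt D\Vkt e=0$, hence $\Vkt e=\Vkt 0$ (as $\Vkt D\succ0$ for $\nu=\tfrac12$), hence $\Vkt A^T\Vkt A=\Vkt I$, so that plate too is mapped isometrically and its edge lengths are unchanged. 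Therefore $G(\Vkt k')$ is undeformed. Passing to the contrapositive, a critical point that is \emph{not} undeformed is necessarily shaky --- which is the assertion of the theorem --- and I would add the trivial converse that every undeformed realization, being a global minimum of the non-negative function $U$, is automatically a critical point.

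The remaining facts are standard and I would only state them: (a) the equivalence, for an isostatic framework, of ``carries a non-trivial self-stress'' and ``has a non-trivial infinitesimal flex'' (shakiness), which is rank--nullity for the $b\times sn$ matrix $\Vkt R$ together with $b=sn-\binom{n+1}{2}$; (b) that the quadratic part of $\Vkt M$ is positive semi-definite, i.e.\ the positive semi-definiteness of $U$ already recorded after Eq.~(\ref{pseudo}); and (c) that the plate energy~(\ref{esetp}) is a positive semi-definite quadratic form in $\Vkt q$ because $\Vkt e$ depends affinely on the three squared edge lengths of the plate (via the Gram matrix of its edge vectors), so adding triangular plates does not spoil the convex structure in $\Vkt q$. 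The one genuinely load-bearing step --- and hence the main obstacle --- is the first: identifying $\partial U/\partial\Vkt k'$ with $\Vkt R(\Vkt k')^T$ applied to the intrinsic gradient $\partial U/\partial\Vkt q$, so that criticality turns into the self-stress condition and the isostatic hypothesis can be invoked.
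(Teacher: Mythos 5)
Your proof is correct and rests on the same mechanism as the paper's --- translating the criticality condition $\nabla_{\Vkt k'}U=\Vkt o$ into the existence of a self-stress $\omega$ satisfying the equilibrium condition, and then using that a non-shaky realization of an isostatic framework admits only the trivial self-stress --- but you carry out the two delicate steps by a different and arguably more transparent route. Where the paper verifies by explicit symbolic computation (the overdetermined system (\ref{test})) that the plate gradients $\nabla_{\hspace{-0.5mm}i}U_{ijk}$, $\nabla_{\hspace{-0.5mm}j}U_{ijk}$, $\nabla_{\hspace{-0.5mm}k}U_{ijk}$ decompose into edge-direction components with mutually consistent coefficients $\omega_{ij},\omega_{ik},\omega_{jk}$, you obtain this structurally: by Lemma \ref{basic} every energy term depends on $\Vkt k'$ only through the squared deformed edge lengths, so the chain rule through $\Vkt q=(\ldots,L_{ij}'^2,\ldots)^T$ automatically yields the gradient as a stressed-edge sum with $\omega_{ij}=2\,\partial U/\partial(L_{ij}'^2)$, and the consistency across the three vertices of a plate is built in rather than checked. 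Secondly, the implication ``trivial self-stress $\Rightarrow$ undeformed,'' which is immediate for bars from the explicit $\omega_{ij}=A(L_{ij}'^2-L_{ij}^2)/(2L_{ij}^3)$ but left implicit for plates in the paper, is closed cleanly by your convexity argument: $U$ is a non-negative quadratic in $\Vkt q$ vanishing at $\Vkt q_0$, so $\Vkt H(\Vkt q-\Vkt q_0)=\Vkt o$ forces $U=0$, every non-negative summand vanishes, and positive definiteness of $\Vkt D$ for $\nu=\tfrac12$ gives $\Vkt A^T\Vkt A=\Vkt I$. The only blemish is the assertion that $U\geq 0$ on all of $\RR^b$ in the $\Vkt q$-variables: the sum-of-squares decomposition invoked in Sec.\ \ref{totalGL} guarantees non-negativity for real edge lengths, i.e.\ on the non-negative orthant of $\Vkt q$; since $\Vkt q_0$ lies in the interior of that orthant, this still makes $\Vkt q_0$ an interior minimum of a quadratic and hence forces $\Vkt H\succeq 0$ globally, so nothing in your argument breaks --- but the statement as written overclaims slightly and should be restricted to the realizable range.
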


\begin{proof}
The proof is based on the following characterization of shakiness in terms of self-stress (e.g.\ \cite{connelly_book}):  
If one can assign to each edge $e_{ij}$ of  $G(\Vkt k')$ a  {\it stress} ${\omega}_{ij}\in\RR$ in a way that for each 
knot the so-called {\it equilibrium condition}
\begin{equation}\label{equilibrium}
\sum_{i<j\in N_i}{\omega}_{ij}(\Vkt k'_i-\Vkt k'_j) + \sum_{i>j\in N_i}{\omega}_{ji}(\Vkt k'_i-\Vkt k'_j)=\Vkt o
\end{equation}
is fulfilled, where $\Vkt o$ denotes the $n$-dimensional zero vector, then the $b$-dimensional vector ${\omega}=(\ldots, {\omega}_{ij} , \ldots)^T$ is referred as 
{\it self-stress}. 
If ${\omega}$ differs from the zero vector, then the realization $G(\Vkt k')$ of an isostatic framework is shaky (e.g.\ \cite{gluck,roth}). 

Now we consider $U$ in dependence of the configuration of knots $\Vkt k'$, i.e.\ $U(\Vkt k')$ and compute 
the system of equations characterizing its critical points as
\begin{equation}
\nabla_{\hspace{-0.5mm}i}\, U(\Vkt k')=\Vkt o  \quad \text{with} 
\quad
\nabla_{\hspace{-0.5mm}i}\, U(\Vkt k')=\left(\tfrac{\partial U}{\partial k'_{i,1}}, 
 \ldots , 
\tfrac{\partial U}{\partial k'_{i,n}}\right) \quad \text{and} 
\quad
i=1,\ldots, s
\end{equation}
where $(k'_{i,1},\ldots , k'_{i,n})$ is the coordinate vector of $\Vkt k'_i$.
Due to the sum rule for derivatives we only have to investigate $\nabla_{\hspace{-0.5mm}i}$ 
of $U_{ijk}(\Vkt k')$ and  $U_{ij}(\Vkt k')$ given in Eqs.\ (\ref{esetp}) and (\ref{bar_GL}). 
\begin{enumerate}[1.]
\item Due to  
$\nabla_{\hspace{-0.5mm}i}\, U_{ij}(\Vkt k') = \tfrac{A(L_{ij}'^2-L_{ij}^2)}{2L_{ij}^3}(\Vkt k'_i-\Vkt k'_j)$ Theorem \ref{thm:critic} is valid for frameworks, which only consist of bars, 
as $\nabla_{\hspace{-0.5mm}i}\, U(\Vkt k')$ equals 
Eq.\ (\ref{equilibrium}) with ${\omega}_{ij}={A(L_{ij}'^2-L_{ij}^2)}/(2L_{ij}^3)$. 
\item
If triangular plates are involved we consider  
$\nabla_{\hspace{-0.5mm}i}\, U_{ijk}(\Vkt k')$, $\nabla_{\hspace{-0.5mm}j}\, U_{ijk}(\Vkt k')$ and $\nabla_{\hspace{-0.5mm}k}\, U_{ijk}(\Vkt k')$. 
Straight forward symbolic computations (e.g.\ using Maple) show that the following overdetermined system of equations
\begin{equation}\label{test}
\begin{split}
{\omega}_{ij}(\Vkt k'_i-\Vkt k'_j) + {\omega}_{ik}(\Vkt k'_i-\Vkt k'_k) - \nabla_{\hspace{-0.5mm}i}\, U_{ijk}(\Vkt k') &=\Vkt o \\
{\omega}_{ij}(\Vkt k'_j-\Vkt k'_i) + {\omega}_{jk}(\Vkt k'_j-\Vkt k'_k) - \nabla_{\hspace{-0.5mm}j}\, U_{ijk}(\Vkt k') &=\Vkt o \\
{\omega}_{ik}(\Vkt k'_k-\Vkt k'_i) + {\omega}_{jk}(\Vkt k'_k-\Vkt k'_j) - \nabla_{\hspace{-0.5mm}k}\, U_{ijk}(\Vkt k') &=\Vkt o 
\end{split}
\end{equation}
has a unique solution for ${\omega}_{ij}$, ${\omega}_{ik}$ and ${\omega}_{jk}$ if $K_i', K_j', K_k'$ generate a triangle. If these points are collinear we 
get a positive dimensional solution set. Hence, one can replace $\nabla_{\hspace{-0.5mm}i}\, U_{ijk}(\Vkt k')$ by a linear combination
${\omega}_{ij}(\Vkt k'_i-\Vkt k'_j) + {\omega}_{ik}(\Vkt k'_i-\Vkt k'_k)$ where the coefficients ${\omega}_{ij}$ and 
${\omega}_{ik}$ are compatible with the other equations of (\ref{test}). As a consequence $\nabla_{\hspace{-0.5mm}i}\, U(\Vkt k')$ can again be written in 
the form of Eq.\ (\ref{equilibrium}).  
\hfill $\BewEnde$
\end{enumerate}
\end{proof}

This result implies that the Theorems 1 and 2 of \cite{ark2020}\footnote{The isostaticity was tacitly assumed for the results stated in  \cite{ark2020} 
by referring to \cite{gluck,roth}.}
also hold true for isostatic frameworks with bars and triangular plates. 
They can be summed up as follows: 

\begin{theorem}\label{thm12}
If an isostatic framework snaps out of a stable\footnote{A realization $G(\Vkt k)$ is called stable if it corresponds to local minimum of $U(\Vkt k)$.} realization $G(\Vkt k)$ 
by applying the minimum GL strain energy needed to it, then 
the corresponding deformation of the realization has to pass a shaky realization $G(\Vkt k')$ at the maximum state of deformation. 
Such a snap of a framework ends up in a realization $G(\Vkt k'')$ which is either undeformed or deformed and shaky. 
\end{theorem}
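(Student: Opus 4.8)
The plan is to recast a \emph{snap} as a continuous path in the configuration space and then to run a mountain-pass/minimax argument, with Theorem~\ref{thm:critic} supplying the classification of the relevant critical realizations. As a preliminary I would fix a frame, i.e.\ work on the quotient of $\RR^{ns}$ by the isometry group. By Lemma~\ref{basic} the denominator of $U$ depends only on the prescribed lengths $L_{ij}$ and is a nonzero constant here, while each $L_{ij}'^2=\Vert\Vkt k_i'-\Vkt k_j'\Vert^2$ is quadratic in the knot coordinates; hence $U(\Vkt k')$ is a genuine polynomial of degree $4$ in those coordinates, in particular $C^\infty$, and it is positive semi-definite (sum of squares for $U_{ijk}$, directly for $U_{ij}$). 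Since an isostatic framework is rigid, hence connected, driving any knot to infinity forces some incident edge length — and with it $U$ — to infinity, so every sublevel set $\{U\le c\}$ is compact modulo isometries; this supplies the compactness (Palais--Smale) property used below.

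Next I would make the notion of a snap precise. The stable realization $G(\Vkt k)$ is a local minimum of the positive semi-definite $U$; generically it is undeformed with $U(\Vkt k)=0$, and local rigidity of the isostatic framework makes it isolated among realizations with the given intrinsic metric, so by compactness of small spheres there are a neighbourhood $N$ of $\Vkt k$ and a $\delta>0$ with $U\ge U(\Vkt k)+\delta$ on $\partial N$. A snap into another realization $G(\Vkt k'')$ is a continuous path $\gamma\colon[0,1]\to\RR^{ns}$ with $\gamma(0)=\Vkt k$, $\gamma(1)=\Vkt k''$, that leaves the basin of $\Vkt k$; its required GL strain energy is $\max_{t}U(\gamma(t))$, and the \emph{minimum energy needed to snap} is $E^{\ast}:=\inf_{\gamma}\max_{t}U(\gamma(t))$ over all such paths. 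The barrier on $\partial N$ gives $E^{\ast}\ge U(\Vkt k)+\delta>U(\Vkt k)\ge 0$; alternatively $E^{\ast}=0$ would force a non-constant path inside the discrete set $\{U=0\}$ of undeformed realizations, which is impossible.

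The core step is then the mountain-pass assertion: along an energy-minimising snap $\gamma^{\ast}$ the state $\gamma^{\ast}(t^{\ast})$ of maximal deformation, $U(\gamma^{\ast}(t^{\ast}))=E^{\ast}$, is a critical point of $U$. Indeed, if $\nabla U$ did not vanish there, the negative gradient flow of $U$ (deformation lemma, legitimate by the compactness of sublevel sets) would allow one to push $\gamma^{\ast}$ down near its maximum and strictly lower $\max_{t}U$, contradicting the definition of $E^{\ast}$. Since $E^{\ast}>0$ the realization $G(\gamma^{\ast}(t^{\ast}))$ carries strictly positive energy, hence is deformed, so Theorem~\ref{thm:critic} forces it to be shaky; with $\Vkt k':=\gamma^{\ast}(t^{\ast})$ this proves the first claim. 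For the second, once the framework has snapped it comes to rest in a stable equilibrium, i.e.\ a local minimum — and so a critical point — of $U$, whence Theorem~\ref{thm:critic} applied once more shows that the terminal realization $G(\Vkt k'')$ is either undeformed or deformed and shaky.

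The step I expect to be the main obstacle is making this mountain-pass reasoning fully rigorous: verifying the Palais--Smale condition for $U$ modulo isometries, handling the case where $\gamma^{\ast}$ attains its maximum on a set rather than at a single point, and invoking the deformation lemma with care; one must also pin down ``leaving the basin of $\Vkt k$'' precisely enough that the class of admissible paths is non-empty and $E^{\ast}$ is a genuine critical value rather than merely $U(\Vkt k)$. The remaining ingredients — the polynomial form of $U$, its positive semi-definiteness, and the identification of critical points via Theorem~\ref{thm:critic} — are already in hand.
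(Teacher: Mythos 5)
Your proposal is correct and takes essentially the same route as the paper: the paper does not prove Theorem~\ref{thm12} from scratch but presents it as an immediate consequence of Theorem~\ref{thm:critic} combined with the critical-point/minimax arguments of the cited earlier work on bar-joint frameworks, and your mountain-pass argument is exactly that reasoning made explicit. The key ingredient is identical in both cases, namely that Theorem~\ref{thm:critic} classifies every critical point of $U$ as undeformed or deformed-and-shaky, applied to the energy maximum along a minimal snapping path and to the terminal local minimum.
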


Therefore the snappability $s(\Vkt k)$ of an undeformed realization $G(\Vkt k)$ can be measured by $d_p(\Vkt L',\Vkt L)=D(\Vkt L')$ of Eq.\ (\ref{pseudo}). 
In the following we present the procedure for determining $G(\Vkt k')$, which is similar to the one given in \cite{ark2020}.
As preparatory work for this algorithm we define the quotient set 
$\mathcal{R}:=\mathcal{S}/SE(n)$ where 
$SE(n)$ denotes the group of direct isometries of $\EE^n$ and
$\mathcal{S}$ the set of real saddle points of $U(\Vkt k')$, which can be selected from 
the critical points via the second derivative test. 

Let us assume that  $G(\Vkt k')\in\mathcal{R}$ yields the minimal 
value for $d_p(\Vkt L',\Vkt L)=D(\Vkt L')$ where $G(\Vkt k)$ is the given undeformed realization. The following equation 
\begin{equation}\label{imply}
\kVkt L_t:=\kVkt L+t(\kVkt L'-\kVkt L) \quad \text{with}\quad t\in[0,1]
\end{equation}
implies a path $\Vkt L_t$ in $\RR^b$ between $\Vkt L$ and $\Vkt L'$. 
Along this path  
the deformation energy $U_{ijk}$ of each triangular plate and the deformation energy $U_{ij}$ 
of each bar is {\it monotonic increasing} with respect to the curve parameter $t$, which  
ensures that the minimum mechanical work needed is applied on the framework to reach $G(\Vkt k')$. 
This results from Lemma \ref{basic}, as $U_{ijk}(\Vkt L_t)$ as well as $U_{ij}(\Vkt L_t)$  are quadratic functions in $t$, which are at their minima for $t=0$. 
The path $\Vkt L_t$ corresponds to different 1-parametric deformations  of realizations in $\EE^n$. 
If among these a deformation $G(\Vkt k_t)$ with the property 
\begin{equation}\label{property}
G(\Vkt k_t)\big|_{t = 0}=G(\Vkt k),\quad G(\Vkt k_t)\big|_{t = 1}=G(\Vkt k') 
\end{equation}
exists, then the given realization $G(\Vkt k)$ is deformed into $G(\Vkt k')$ under $\Vkt L_t$. 
Computationally the property (\ref{property}) can be checked by a user defined homotopy approach 
relying on the software Bertini \cite[Sec.\ 2.3]{bates}. 
If such a deformation does not exist then we redefine 
$\mathcal{R}$ as $\mathcal{R}\setminus\left\{ G(\Vkt k')\right\}$ and run again the procedure explained in this paragraph until we 
get the sought-after realization $G(\Vkt k')$ implying $s(\Vkt k)$ by $U(\Vkt k')$.
If we end up with $\mathcal{R}=\varnothing$ then we set $s(\Vkt k)=\infty$.


\subsection{Singularity-distance}\label{sec:singdist}

One can rewrite the $s$ equations given in (\ref{equilibrium})
in matrix form as $\Vkt R_{G(\Vkt k')}{\omega}=\Vkt o$, where the $(sn\times b)$-matrix $\Vkt R_{G(\Vkt k')}$ is the so-called {\it rigidity matrix}.
It is well known (e.g.\ \cite{meera}) that a realization $G(\Vkt k')$ is shaky  
if and only if $rk(\Vkt R_{G(\Vkt k')})<r$ with $r:=sn-(n^2+n)/2$. Clearly, based on this rank condition one can characterize all shaky realizations $G(\Vkt k')$ algebraically by  
the variety $V(J)$ where $J$ denotes the ideal generated by all minors of $\Vkt R_{G(\Vkt k')}$ of order $r\times r$. 
For isostatic frameworks this results in a single condition, which is also known as {\it pure condition} and can be 
derived in terms of brackets (cf.\ \cite{white}).

In the following we want to determine the real point $\Vkt k'$ of this {\it shakiness variety} $V(J)$ which minimizes the 
value $d_p(\Vkt L',\Vkt L)=D(\Vkt L')$,  
where $G(\Vkt k)$ denotes the given undeformed realization. Moreover there should again exist a 1-parametric deformation $\Vkt L_t$ implied by Eq.\ (\ref{imply}) 
such that the properties of Eq.\ (\ref{property}) hold. If this is the case then we call $d_p(\Vkt L',\Vkt L)=D(\Vkt L')$ the singularity-distance $\sigma(\Vkt k)$.

\begin{theorem}\label{thm:ident}
For a realization $G(\Vkt k)$ of an isostatic framework, which is undeformed and not shaky,  the singularity-distance $\sigma(\Vkt k)$ equals the snappability $s(\Vkt k)$. 
\end{theorem}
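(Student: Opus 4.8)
The plan is to show that the two constrained minimization problems defining $s(\Vkt k)$ and $\sigma(\Vkt k)$ are in fact the same problem, i.e.\ that they are minimizing the same objective $D(\Vkt L')$ over the same feasible set (up to the homotopy selection, which is identical in both constructions). The objective is manifestly identical, so the whole statement reduces to identifying the feasible sets: on the snappability side, the realizations $G(\Vkt k')$ that are real saddle points of $U(\Vkt k')$ reachable by a monotone path $\Vkt L_t$; on the singularity-distance side, the real points of the shakiness variety $V(J)$ reachable by such a path. Since $G(\Vkt k)$ is undeformed and not shaky, it is a strict local minimum of $U(\Vkt k)$ (the undeformed realization has $U=0$ and $U\ge 0$ with the only zeros being congruent copies of $G(\Vkt k)$, and non-shakiness rules out a degenerate Hessian direction). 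Hence the closest critical point that is not the undeformed realization itself is a saddle, not another minimum, and by Theorem~\ref{thm:critic} every such critical point is a deformed shaky realization; conversely every deformed shaky realization is by Theorem~\ref{thm:critic} a critical point of $U$.

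Concretely I would proceed in three steps. First, invoke Theorem~\ref{thm:critic}: the set of critical points of $U(\Vkt k')$ splits into undeformed realizations and deformed shaky ones, so $\mathcal{S}$ (the real saddle points) is, after discarding the undeformed stratum, precisely the set of deformed realizations lying on $V(J)$. Second, observe that the quotient by $SE(n)$ and the subsequent homotopy filtering via Eq.~(\ref{imply})--(\ref{property}) are carried out in exactly the same way in Sec.~\ref{sec:snap} and Sec.~\ref{sec:singdist}; in particular the monotonicity of each $U_{ij}$ and $U_{ijk}$ along $\Vkt L_t$ (which follows from Lemma~\ref{basic}, these being quadratics in $t$ minimized at $t=0$) is what guarantees in both settings that the path realizes the minimal mechanical work. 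Third, conclude that the realization $G(\Vkt k')$ selected by the snappability procedure and the one selected by the singularity-distance procedure coincide, whence $s(\Vkt k)=D(\Vkt L')=\sigma(\Vkt k)$. The boundary case $\mathcal{R}=\varnothing$ (no admissible deformed shaky realization) corresponds to $s(\Vkt k)=\infty$ on one side and to the minimization over $V(J)$ being empty/unreachable on the other, so $\sigma(\Vkt k)=\infty$ as well.

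The main obstacle, and the point that needs care, is the second derivative test: one has to argue that a deformed shaky critical point of $U$ is genuinely a saddle point and not, say, a local minimum, so that the set $\mathcal{S}$ used to build $\mathcal{R}$ really captures \emph{all} deformed shaky realizations relevant to the singularity-distance. Here I would use that at a deformed shaky realization there is, by definition, a nonzero self-stress $\omega$ and a corresponding infinitesimal flex; along the flex direction $U$ does not strictly increase to second order while the energy is strictly positive at the configuration, which forces the Hessian of $U$ to be indefinite (or at least degenerate) rather than positive definite — so such a point cannot be a local minimum, and since it is a critical point with $U>0$ in a neighbourhood containing the undeformed minimum it is a saddle. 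A secondary subtlety is that the minimization for $\sigma(\Vkt k)$ ranges over \emph{all} real points of $V(J)$, not only the critical points of $U$; but a point of $V(J)$ that is not a critical point of $U$ cannot furnish the minimum reachable by a monotone path, because Theorem~\ref{thm12} shows any snap passes through a shaky realization that is a critical point, and the monotone path construction is precisely what both procedures optimize over. Once these two points are pinned down, the equality $\sigma(\Vkt k)=s(\Vkt k)$ is immediate.
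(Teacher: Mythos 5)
Your proposal has a genuine gap in the direction $\sigma(\Vkt k)\geq s(\Vkt k)$. The core of your argument is the claim that the two feasible sets coincide, i.e.\ that the real saddle points of $U$ (after discarding the undeformed stratum) are \emph{precisely} the deformed realizations lying on $V(J)$. Only one inclusion is true: Theorem~\ref{thm:critic} says every deformed critical point of $U$ is shaky, but not conversely. The shakiness variety $V(J)$ is cut out by a rank condition (for an isostatic framework, a single polynomial equation, the pure condition) and is therefore positive-dimensional, while the critical points of $U$ form a discrete set in general. So the minimization defining $\sigma(\Vkt k)$ ranges over a much larger set than the one defining $s(\Vkt k)$, and a priori its minimum is attained at a constrained minimizer of $D$ on $V(J)$ (a Lagrange point), not at an unconstrained critical point of $U$ --- which would give $\sigma(\Vkt k)<s(\Vkt k)$. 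You notice this ``secondary subtlety'' at the end, but your resolution (``a point of $V(J)$ that is not a critical point of $U$ cannot furnish the minimum reachable by a monotone path, because Theorem~\ref{thm12} shows any snap passes through a shaky realization that is a critical point'') is an assertion, not an argument: Theorem~\ref{thm12} constrains snaps, and it is not yet established that reaching a cheaper non-critical shaky realization yields a snap. That is exactly the missing step.

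The paper closes this gap with an idea absent from your proposal: shakiness is the limiting case in which two incongruent realizations coincide, so if $G(\Vkt k'')$ is the shaky realization realizing $\sigma(\Vkt k)$, the path $\kVkt L_t=\kVkt L+t(\kVkt L''-\kVkt L)$ corresponds to \emph{more than one} 1-parametric deformation ending at $G(\Vkt k'')$ (the set $\mathcal{D}$ with $\#\mathcal{D}>1$). The framework can therefore deform into $G(\Vkt k'')$ along one branch and leave along another, i.e.\ it can snap over $G(\Vkt k'')$ with energy density $\sigma(\Vkt k)$; since $s(\Vkt k)$ is the minimal such value, $\sigma(\Vkt k)<s(\Vkt k)$ is impossible. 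Combined with $\sigma(\Vkt k)\leq s(\Vkt k)$ (which you do get correctly: the saddle realizing the snappability is shaky, hence feasible for the singularity-distance minimization), this gives equality. Your lengthy discussion of why a deformed shaky critical point must be a saddle rather than a minimum is not needed for this theorem and does not substitute for the branch-multiplicity argument.
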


\begin{proof}
$\sigma(\Vkt k)\leq s(\Vkt k)$ has to hold, as
the realization $G(\Vkt k')$ of Theorem \ref{thm12}, which implies the snappability $s(\Vkt k)$, is shaky. 
We show the equality indirectly by assuming $\sigma(\Vkt k)< s(\Vkt k)$. We denote the shaky realization implying $\sigma(\Vkt k)$ 
by $G(\Vkt k'')$ which  corresponds to $\Vkt L''\in\RR^b$. 
In analogy to Eq.\  (\ref{imply}) we consider the relation
\begin{equation}
\kVkt L_t:=\kVkt L+t(\kVkt L''-\kVkt L) \quad \text{with}\quad t\in[0,1]
\end{equation}
defining a path $\Vkt L_t$ in $\RR^b$ between $\Vkt L$ and $\Vkt L''$, which corresponds to a set 
of 1-parametric deformations $\left\{G(\Vkt k_t^1),G(\Vkt k_t^2), \ldots \right\}$. 
A subset $\mathcal{D}$ of this set has the property $G(\Vkt k_t^i)|_{t = 1}=G(\Vkt k'')$ where $\#\mathcal{D}>1$ holds as 
$G(\Vkt k'')$ is shaky \cite{wohlhart,stachel_wunderlich}. 
Therefore the isostatic framework can snap out of $G(\Vkt k)$ over $G(\Vkt k'')$ which contradicts $\sigma(\Vkt k)< s(\Vkt k)$ 
($\Longrightarrow$ $\sigma(\Vkt k)= s(\Vkt k)$). 
\hfill $\BewEnde$
\end{proof}

\begin{figure}[t]
\begin{minipage}{65mm}
\begin{center} 
\begin{overpic}
    [width=65mm]{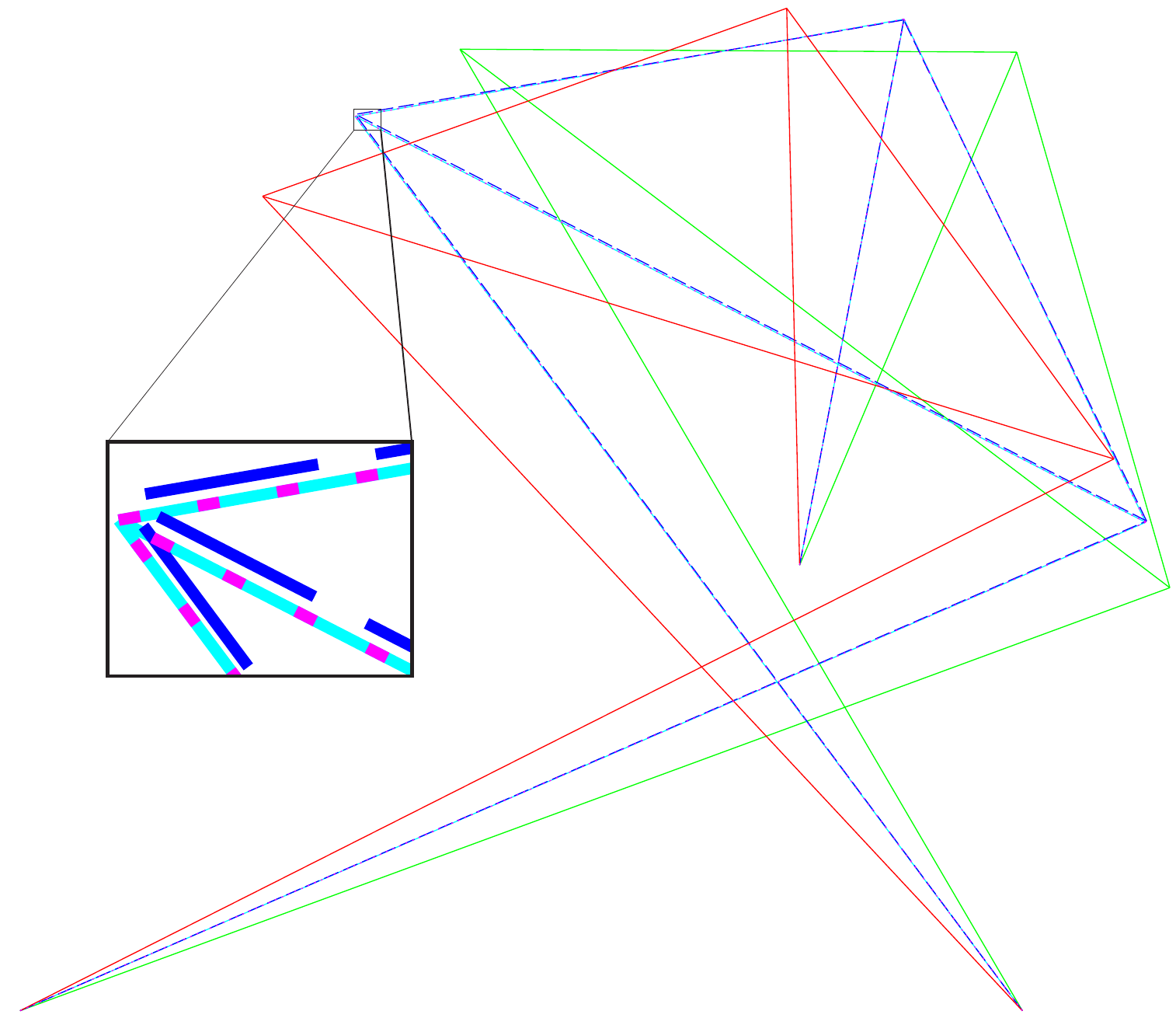}  
\begin{scriptsize}
\put(0.5,3.5){$K_1$}
\put(88.5,0){$K_2$}
\put(69,38){$K_3$}
\end{scriptsize} 		
  \end{overpic} 
\end{center} 
\end{minipage}
\hfill
\begin{scriptsize}
\begin{minipage}{55mm}
\begin{center}
{\bf Computational 
Data}
\begin{align*}
&{\green \Vkt k_4=\begin{pmatrix}10.3238 \\ 3.7970\end{pmatrix}} &\, &{\green \Vkt k_5=\begin{pmatrix}3.9493 \\8.6308\end{pmatrix}} &\, &{\green \Vkt k_6=\begin{pmatrix}8.9493 \\ 8.6043\end{pmatrix}} \\
&{\rot \Vkt k_4=\begin{pmatrix} 9.8218 \\ 4.9528 \end{pmatrix}} &\, &{\rot \Vkt k_5=\begin{pmatrix} 2.1773 \\ 7.3110 \end{pmatrix}} &\, &{\rot \Vkt k_6=\begin{pmatrix} 6.8838 \\ 8.9986 \end{pmatrix}} \\
&{\blau \Vkt k_4=\begin{pmatrix} 10.1168 \\ 4.3957 \end{pmatrix}} &\, &{\blau \Vkt k_5=\begin{pmatrix} 3.0101 \\ 8.0463 \end{pmatrix}} &\, &{\blau \Vkt k_6=\begin{pmatrix} 7.9343 \\ 8.8931 \end{pmatrix}} \\
&{\cyan \Vkt k_4=\begin{pmatrix} 10.1071  \\ 4.3844  \end{pmatrix}} &\, &{\cyan \Vkt k_5=\begin{pmatrix} 3.0084  \\ 8.0281  \end{pmatrix}}  &\, &{\cyan \Vkt k_6=\begin{pmatrix} 7.9382  \\ 8.9006 \end{pmatrix}} \\
&{\magenta \Vkt k_4=\begin{pmatrix} 10.1071 \\  4.3845 \end{pmatrix}} &\, &{\magenta \Vkt k_5=\begin{pmatrix} 3.0084  \\ 8.0282  \end{pmatrix}} &\, &{\magenta \Vkt k_6=\begin{pmatrix} 7.9382 \\ 8.9006 \end{pmatrix}} 
\end{align*}
\centering
  \begin{tabular}{|c|c|c|c|c|} \hline
  \, &   tracked paths & $\#$solutions & $\#\mathcal{R}$  & $\sigma(\Vkt k)=s(\Vkt k)$   \\
    \hline
			{\blau $G(\Vkt k')$}   &   $729$ & $285$ & 62  & $3.2531/10^6$  \\
	 \hline
			{\cyan $G(\Vkt k')$}   &   $729$ & $219$ & $58$  & $1.8271/10^6$  \\
 \hline
			{\magenta $G(\Vkt k')$}   &  $59\,163$ \cite{ark2020} & 758 & 142  & $1.8285/10^6$  \\
 \hline
  \end{tabular}
\end{center}
\end{minipage}
\end{scriptsize}
\caption{3-legged planar parallel manipulator of Ex.\ \ref{ex1}: Illustration (left) and data (right). 
}\label{fig:1}	
\end{figure}

\noindent
\begin{remark} Our theoretical  considerations end with the following three final remarks: 
\begin{enumerate}[(a)]
\item
The deformation $G(\Vkt k_t)$ of Theorem \ref{thm:ident}  implying $\sigma(\Vkt k)=s(\Vkt k)$ has to be real for $t\in[0,1]$, 
as a real solution can only change over into a complex one through a double root, 
which corresponds to a shaky realization ($\Rightarrow$ contradiction to the definition of $\sigma(\Vkt k)$).  
\item
The results of Sec.\ \ref{sec:singdist} also hold if one uses the CE strain approach (cf.\ \cite{ark2020}). 
Therefore item (a) gives an answer to the open problem stated in \cite[Remark 3]{ark2020}, but our 
concept still ignores the collision of bars and/or plates during the deformation.
\item
The results of this paper also hold for pinned frameworks  (cf.\ \cite[Sec.\ 3.3]{ark2020}). \hfill $\diamond$
\end{enumerate}
\end{remark}

\begin{example}\label{ex1}
We consider a 3-legged planar parallel manipulator (cf.\ Fig.\ \ref{fig:1})  with a pinned base given by 
$\Vkt k_1=(0,0)^T$, $\Vkt k_2=(9,0)^T$, $\Vkt k_3=(7,4)^T$ which is equipped with 
the intrinsic metric $(L_{14}, L_{25}, L_{36}, L_{45}, L_{46}, L_{56})=(11,10,5,8,5,5)$.
The two undeformed realizations {\green $G(\Vkt k)$} and {\rot $G(\Vkt k)$} snap into each other over $G(\Vkt k')$ 
which was computed based on a framework consisting of (a) six bars using GL/CE strain ({\cyan cyan}/{\magenta magenta dotted}), (b) three bars and one triangular plate using GL strain ({\blau blue dashed}).
The computation of the critical points of $U(\Vkt k')$ was performed with the software Bertini \cite{bates}. The number of 
tracked paths for each approach is given in the table displayed in Fig.\ \ref{fig:1}-right as well as the number of 
paths ending up in finite solutions (over $\CC$) under the homotopy continuation. Moreover, the cardinal numbers of 
$\mathcal{R}$ and  $\mathcal{Q}$ are given as well as the value for $\sigma(\Vkt k)=s(\Vkt k)$. The coordinates of the corresponding 
configurations $G(\Vkt k')$ are printed above this table. 
These three configurations are very close together, especially the  {\cyan cyan} and {\magenta magenta} one 
(they are identical up to the first three digits after the comma). 
The latter difference is even not visible in the blow-up provided in Fig.\ \ref{fig:1}-left. 
\end{example}

\section{Conclusion}

We presented the computation of the snappability and the singularity-distance of isostatic frameworks composed of 
bars and triangular plates based on the total elastic strain energy density using the physical concept of GL strain. 
This measure enables the fair comparison of frameworks, which differ in the number of knots, the combinatorial structure, the intrinsic metric and 
the realization of triangular structures (triangular plates vs.\ joint-bar triangles). 
Our methods are demonstrated on the basis of the example of a  3-legged planar parallel manipulator,  
which also points out the computational efficiency of the proposed approach compared to the one using the 
definition of CE strain \cite{ark2020}. 

For a more detailed formulation of this approach including its generalization to frameworks 
involving polygonal plates and/or polyhedra please see \cite{ijss}, where also examples of spatial 
snapping structures are discussed with special emphasis on Stewart-Gough manipulators.

\paragraph{{\bf Acknowledgments}}
The research is supported by Grant No.\ P\,30855-N32 of the Austrian Science Fund FWF.

\end{document}